\documentclass[12pt,cls,onecolumn]{IEEEtran}
\usepackage{subfigure,cite,graphicx,psfig,amsmath,amssymb,mathrsfs,epsf}
\usepackage{cite}
\usepackage{graphicx}
\usepackage{psfrag}
\usepackage{subfigure}
\usepackage{url}
\usepackage{stfloats}
\usepackage{amsmath}
\usepackage{amssymb}
\usepackage{array}

\begin{document}

\title{Opportunistic Interference Mitigation Achieves Optimal Degrees-of-Freedom in Wireless Multi-cell Uplink Networks}
\author{\large Bang Chul Jung, \emph{Member}, \emph{IEEE}, Dohyung Park, and Won-Yong Shin, \emph{Member}, \emph{IEEE} \\
\thanks{The material in this paper was presented in part at the Asilomar Conference on Signals, Systems, and Computers, Pacific Grove, CA, November 2010.}
\thanks{B. C. Jung is with the Department of Information and Communication Engineering, Gyeongsang National University, Tongyeong 650-160, Republic of Korea
(E-mail: bcjung@gnu.ac.kr).}
\thanks{D. Park is with SAIT, Samsung Electronics Co., Ltd., Yongin 446-712, Republic of Korea (E-mail: dohyung22.park@gmail.com).}
\thanks{W.-Y. Shin (corresponding author) is with the School of Engineering and Applied Sciences, Harvard
University, Cambridge, MA 02138 USA
(E-mail:wyshin@seas.harvard.edu).}
        } \maketitle


\markboth{Submitted to IEEE Transactions on Communications} {Jung,
Park, and Shin: Opportunistic Interference Mitigation Achieves
Optimal Degrees-of-Freedom in Wireless Multi-cell Uplink Networks}


\newtheorem{definition}{Definition}
\newtheorem{theorem}{Theorem}
\newtheorem{lemma}{Lemma}
\newtheorem{example}{Example}
\newtheorem{corollary}{Corollary}
\newtheorem{proposition}{Proposition}
\newtheorem{conjecture}{Conjecture}
\newtheorem{remark}{Remark}

\def \diag{\operatornamewithlimits{diag}}
\def \min{\operatornamewithlimits{min}}
\def \max{\operatornamewithlimits{max}}
\def \log{\operatorname{log}}
\def \max{\operatorname{max}}
\def \rank{\operatorname{rank}}
\def \out{\operatorname{out}}
\def \exp{\operatorname{exp}}
\def \arg{\operatorname{arg}}
\def \E{\operatorname{E}}
\def \tr{\operatorname{tr}}
\def \SNR{\operatorname{SNR}}
\def \dB{\operatorname{dB}}
\def \ln{\operatorname{ln}}
\def \bmat{ \begin{bmatrix} }
\def \emat{ \end{bmatrix} }

\def \be {\begin{eqnarray}}
\def \ee {\end{eqnarray}}
\def \ben {\begin{eqnarray*}}
\def \een {\end{eqnarray*}}

\begin{abstract}
We introduce an opportunistic interference mitigation (OIM)
protocol, where a user scheduling strategy is utilized in $K$-cell
uplink networks with time-invariant channel coefficients and base
stations (BSs) having $M$ antennas. Each BS opportunistically
selects a set of users who generate the minimum interference to the
other BSs. Two OIM protocols are shown according to the number $S$
of simultaneously transmitting users per cell: opportunistic
interference nulling (OIN) and opportunistic interference alignment
(OIA). Then, their performance is analyzed in terms of
degrees-of-freedom (DoFs). As our main result, it is shown that $KM$
DoFs are achievable under the OIN protocol with $M$ selected users
per cell, if the total number $N$ of users in a cell scales at least
as $\text{SNR}^{(K-1)M}$. Similarly, it turns out that the OIA
scheme with $S$($<M$) selected users achieves $KS$ DoFs, if $N$
scales faster than $\text{SNR}^{(K-1)S}$. These results indicate
that there exists a trade-off between the achievable DoFs and the
minimum required $N$. By deriving the corresponding upper bound on
the DoFs, it is shown that the OIN scheme is DoF-optimal. Finally,
numerical evaluation, a two-step scheduling method, and the
extension to multi-carrier scenarios are shown.
\end{abstract}

\begin{keywords}
Base station (BS), channel state information, cellular network,
degrees-of-freedom (DoFs), interference, opportunistic interference
alignment (OIA), opportunistic interference mitigation (OIM),
opportunistic interference nulling (OIN), uplink, user scheduling.
\end{keywords}

\newpage


\section{Introduction}

Interference between wireless links has been taken into account as a
critical problem in communication systems. Especially, there exist
three categories of the conventional interference management in
multi-user wireless networks: decoding and cancellation, avoidance
(i.e., orthogonalization), and averaging (or spreading). To consider
both intra-cell and inter-cell interferences of wireless cellular
networks, a simple infinite cellular multiple-access channel (MAC)
model, referred to as the Wyner's model, was characterized and then
its achievable throughput performance was analyzed in~\cite{Wyner1,
Shamai1, Shamai2, Shamai3}. Moreover, joint processing strategy
among multi-cells was developed in a Wyner-like cellular model in
order to efficiently manage the inter-cell
interferences~\cite{Shamai4, Shamai6}. Such cooperation among cells
can be taken into account as another important interference
management scheme. Even if the work in~\cite{Wyner1, Shamai1,
Shamai2, Shamai3, Shamai4, Shamai6} leads to remarkable insight into
complex and analytically intractable practical cellular
environments, the model under consideration is hardly realistic.

Recently, as an alternative approach to show Shannon-theoretic
limits, interference alignment~(IA) was proposed by fundamentally
solving the interference problem when there are two communication
pairs~\cite{MaddahAliMotahariKhandani:08}. It was shown
in~\cite{Jafar_IA_original} that the IA scheme can achieve the
optimal degrees-of-freedom~(DoFs), which are equal to $K/2$, in the
$K$-user interference channel with time-varying channel
coefficients. The basic idea of the scheme is to confine all the
undesired interference from other communication links into a
pre-defined subspace, whose dimension approaches that of the desired
signal space. Hence, it is possible for all users to achieve one
half of the DoFs that we could achieve in the absence of
interference. Since then, interference management schemes based on
IA have been further developed and analyzed in various wireless
network environments: multiple-input multiple-output (MIMO)
interference network~\cite{Jafar_IA_distributed,Jafar_IA_MIMO}, X
network~\cite{Jafar_IA_X_channel,Jafar_Shamai}, and cellular
network~\cite{Tse_IA,MotahariGharanMaddah-AliKhandani,LeeShinClerckx}.
However, the conventional IA
schemes~\cite{Jafar_IA_original,Jafar_IA_MIMO,Ergodic_Viswanath}
require global channel state information (CSI) including the CSI of
other communication links. Furthermore, a huge number of dimensions
based on time/frequency expansion are needed to achieve the optimal
DoFs~\cite{Jafar_IA_original,Jafar_IA_MIMO,Jafar_IA_X_channel,Jafar_Shamai,Tse_IA,Ergodic_Viswanath}.
These constraints need to be relaxed in order to apply IA to more
practical systems. In~\cite{Jafar_IA_distributed}, a distributed IA
scheme was constructed for the MIMO interference channel with
time-invariant coefficients. It requires only local CSI at each node
that can be acquired from all received channel links via pilot
signaling, and thus is more feasible to implement than the original
one~\cite{Jafar_IA_original}. However, a great number of iterations
should be performed until designed transmit/receive beamforming (BF)
vectors converge prior to data transmission.

Now we would like to consider practical wireless uplink networks
with $K$-cells, each of which has $N$ users. IA for $K$-cell uplink
networks was first proposed in~\cite{Tse_IA}, where the interference
from other cells is aligned into a multi-dimensional subspace
instead of one dimension. This scheme also has practical challenges
including a dimension expansion to achieve the optimal DoFs.

In the literature, there are some results on the usefulness of
fading in single-cell downlink broadcast channels, where one can
obtain a multi-user diversity (MUD) gain as the number of mobile
users is sufficiently large: opportunistic
scheduling~\cite{Knopp_Opp}, opportunistic BF~\cite{Viswanath_Opp},
and random BF~\cite{Hassibi_RBF}. More efficient opportunistic
interference management strategy~\cite{Sadjadpour09,Sadjadpour10},
which requires less feedback overhead than that
in~\cite{Hassibi_RBF}, has been developed in broadcast channels,
where similarly as in our study, the minimum number of users needed
for achieving target DoFs has been analyzed.\footnote{Note that the
work in~\cite{Sadjadpour09,Sadjadpour10} was originally conducted in
a single-cell downlink system, but can be extended to multi-cell
downlink environments with a slight modification.} Scenarios
exploiting the MUD gain have also been studied in cooperative
networks by applying an opportunistic two-hop relaying
protocol~\cite{Poor_Opp} and an opportunistic
routing~\cite{Shin_Opp}, and in cognitive radio networks with
opportunistic scheduling~\cite{bcjung_CR, ShenFitz}. In addition,
recent results~\cite{Ergodic_Viswanath,JeonChung} have shown how to
utilize the opportunistic gain when we have a large number of
channel realizations. More specifically, to amplify signals and
cancel interference, the idea of opportunistically pairing
complementary channel instances has been studied in interference
networks~\cite{Ergodic_Viswanath} and multi-hop relay
networks~\cite{JeonChung}. In cognitive radio
environments~\cite{Perlaza_OIA,Perlaza_OIA2,ZhangLiang},
opportunistic spectrum sharing was introduced by allowing the
secondary users to share the radio spectrum originally allocated to
the primary users via transmit adaptation in space, time, or
frequency.

In this paper, we introduce an \emph{opportunistic interference
mitigation (OIM)} protocol for wireless multi-cell uplink networks.
The scheme adopts the notion of MUD gain for performing interference
management. The opportunistic user scheduling strategy is presented
in $K$-cell uplink environments with time-invariant channel
coefficients and base stations (BSs) having $M$ receive antennas. In
the proposed OIM scheme, each BS opportunistically selects a set of
users who generate the minimum interference to the other BSs, while
in the conventional opportunistic
algorithms~\cite{Knopp_Opp,Viswanath_Opp,Hassibi_RBF}, users with
the maximum signal strength at the desired BS are selected for data
transmission. Specifically, two OIM protocols are proposed according
to the number $S$ of simultaneously transmitting users per cell:
opportunistic interference nulling~(OIN) and opportunistic
interference alignment~(OIA) protocols. For the OIA scheme, each BS
broadcasts its pre-defined interference direction, e.g., a set of
orthonormal random vectors, to all the users in other cells, whereas
for the OIN scheme, no broadcast is needed at each BS. Each user
computes the amount of its generating interference, affecting the
other BSs, and feeds back it to its home cell BS.

Their performance is then analyzed in terms of achievable DoFs (also
known as capacity pre-log factor or multiplexing gain). It is shown
that $KM$ DoFs are achievable under the OIN protocol with $M$
selected users per cell, while the OIA scheme with $S$ selected
users, whose number is smaller than $M$, achieves $KS$ DoFs. As our
main result, we analyze the scaling condition between the number $N$
of per-cell users the received signal-to-noise ratio (SNR) under
which our achievability result holds in $K$-cell networks, each of
which has $N$ users. More specifically, we show that the
aforementioned DoFs are achieved asymptotically, provided that $N$
scales faster than $\text{SNR}^{(K-1)M}$ and $\text{SNR}^{(K-1)S}$
for the OIN and OIA protocols, respectively. From the result, it is
seen that there exists a fundamental trade-off between the
achievable DoFs and the minimum required number $N$ of users per
cell, based on the two proposed schemes. In addition, we derive an
upper bound on the DoFs in $K$-cell uplink networks. It is shown
that the upper bound always approaches $KM$ regardless of $N$ and
thus the OIN scheme achieves the optimal DoFs asymptotically with
the help of the opportunism.

Some important aspects are discussed as follows. To validate the OIA
scheme, computer simulations are performed---the amount of
interference leakage is evaluated as
in~\cite{Jafar_IA_distributed,YuSung}. In addition, the conventional
opportunistic mechanism exploiting the MUD gain in the
literature~\cite{Knopp_Opp,Viswanath_Opp,Hassibi_RBF} inspires us to
introduce a two-step scheduling strategy with a slight modification.
We show that a logarithmic gain can further be obtained, similarly
as in~\cite{Knopp_Opp,Viswanath_Opp,Hassibi_RBF}, while the full
DoFs are maintained. Extension to multi-carrier systems of our
achievability result is also taken into account. Finally, the
proposed scheme is also compared with the existing methods which can
also asymptotically achieve the optimal DoFs in cellular uplink
networks.

As in~\cite{Jafar_IA_distributed}, the OIM protocol basically
operates with local CSI and no time/frequency expansion, thereby
resulting in easier implementation. No iteration is also needed
prior to data transmission. The scheme thus operates as a
decentralized manner which does not involve joint processing among
all communication links.

The rest of this paper is organized as follows. In
Section~\ref{sect:System_model}, we introduce the system and channel
models. In Section~\ref{sect:Proposed}, the OIM technique is
proposed for cellular networks and its achievability in terms of
DoFs is also analyzed. Section~\ref{sect:Upperbound_DoF} shows an
upper bound on the DoFs. Numerical evaluation, the two-step
scheduling method, extension to multi-carrier scenarios, and
comparison with the existing methods are shown in
Section~\ref{sect:Numerical_EX}. Finally, we summarize the paper
with some concluding remark in Section~\ref{sect:Conclusion}.

Throughout this paper, the superscripts $T$, $H$, and $\dagger$
denote the transpose, conjugate transpose, and pseudo-inverse,
respectively, of a matrix (or a vector). $\mathbb{C}$, $\|\cdot\|$,
$\mathbf{I}_n$, $\lambda_{\min}(\cdot)$, $\mathbb{E}[\cdot]$, and
$\text{diag}(\cdot)$ indicate the field of complex numbers,
$L_2$-norm of a vector, the identity matrix of size $n\times n$, the
smallest eigenvalue of a matrix, and the statistical expectation,
and the vector consisting of the diagonal elements of a matrix,
respectively.

\section{\label{sect:System_model} System and Channel Models}

Consider the interfering MAC (IMAC) model in~\cite{Tse_IA}, which is
one of multi-cell uplink scenarios, to describe practical cellular
networks.
 As illustrated in Fig.~\ref{FIG:system}, there are
multiple cells, each of which has multiple mobile users. The example
for $K=2$, $N=3$, and $M=2$ is shown in Fig.~\ref{FIG:system}. Under
the model, each BS is interested only in traffic demands of users in
the corresponding cell. Suppose that there are $K$ cells and there
are $N$ users in a cell. We assume that each user is equipped with a
single transmit antenna and each cell is covered by one BS with $M$
receive antennas. The channel in a single-cell can then be regarded
as the single-input multiple-output~(SIMO) MAC. If $N$ is much
greater than $M$, then it is possible to exploit the channel
randomness and thus to obtain the opportunistic gain in multi-user
environments.

The term $\mathbf{h}_{i,j}^{(k)} \in \mathbb{C}^{M \times 1}$
denotes the channel vector between user $j$ in the $k$-th cell and
BS $i$, where $j\in\{1,\cdots,N\}$ and $i,k\in\{1,\cdots,K\}$. The
channel is assumed to be Rayleigh, whose elements have zero-mean and
unit variance, and to be independent across different $i$, $j$, and
$k$. We assume a block-fading model, i.e., the channel vectors are
constant during one block (e.g., frame) and changes to a new
independent value for every block. The receive signal vector
$\mathbf{y}_i\in \mathbb{C}^{M \times 1}$ at BS $i$ is given by
\begin{eqnarray}\label{eq.receive_vector_BS_i}
\mathbf{y}_i &=& \sum_{j=1}^{S} \mathbf{h}_{i,j}^{(i)}x_{j}^{(i)} +
\sum_{k=1, k \neq i}^{K} \sum_{n=1}^{S}
\mathbf{h}_{i,n}^{(k)}x_{n}^{(k)} + \mathbf{z}_i,
\end{eqnarray}
where $x_{j}^{(i)}$ is the transmit symbol of user $j$ in the $i$-th
cell and $S$ represents the number of users transmitting data
simultaneously in each cell for $S\in\{1,\cdots,M\}$. The received
signal $\mathbf{y}_i$ at BS $i$ is corrupted by the independently
identically distributed (i.i.d.) and circularly symmetric complex
additive white Gaussian noise~(AWGN) vector $\mathbf{z}_i \in
\mathbb{C}^{M \times 1}$ whose elements have zero-mean and variance
$N_0$. We assume that each user has an average transmit power
constraint $\mathbb{E}\left[ \left|x_{j}^{(i)}\right|^2 \right] \leq
P$. Then, the received SNR at each BS is expressed as a function of
$P$ and $N_0$, which depends on the decoding process at the receiver
side. In this work, we take into account a simple zero-forcing (ZF)
receiver based on the channel vectors between the BS and its
selected home cell users, which will be discussed in detail in
Section~\ref{sect:OIM}.

\section{\label{sect:Proposed}  Achievability Result}

We propose the following two OIM protocols: OIN and OIA protocols.
Then, their performance is analyzed in terms of achievable DoFs.

\subsection{OIM in $K$-cell Uplink Networks} \label{sect:OIM}

We mainly focus on the case for $SK>M$, since otherwise we can
simply achieve the maximum DoFs by applying the conventional ZF
receiver (at BS $i\in\{1,\cdots,K\}$) based on the following channel
transfer matrix
\begin{eqnarray}
\bmat\mathbf{h}_{1,1}^{(i)}&\cdots&\mathbf{h}_{1,S}^{(i)}&\cdots&\mathbf{h}_{K,1}^{(i)}&\cdots&\mathbf{h}_{K,S}^{(i)}\emat.
\nonumber
\end{eqnarray}

\subsubsection{OIN Protocol}

We first introduce an OIN protocol with which $M$ selected users in
a cell transmit their data simultaneously, i.e., the case where
$S=M$. It is possible for user $j$ in the $i$-th cell to obtain all
the cross-channel vectors $\mathbf{h}_{k,j}^{(i)}$ by utilizing a
pilot signaling sent from other cell BSs, where
$j\in\{1,\cdots,N\}$, $i\in\{1,\cdots,K\}$, and
$k\in\{1,\cdots,i-1,i+1,\cdots,K\}$.

We now examine how much the cross-channels of selected users are in
deep fade by computing the following value $L_{k,j}^i$:
\begin{equation}
L_{k,j}^i=\left\|\mathbf{h}_{k,j}^{(i)}\right\|^2,
\label{eq.leakage_of_IN}
\end{equation}
which is called {\em leakage of interference (LIF)}, for
$k\in\{1,\cdots,i-1,i+1,\cdots,K\}$. For user $j$ in the $i$-th
cell, the user scheduling metric $L_j^i$ is given by
\begin{eqnarray} \label{eq.leakage_of_IA_sum}
L_{j}^{i} = \sum_k L_{k,j}^{i}
\end{eqnarray}
for $k\in\{1,\cdots,i-1,i+1,\cdots,K\}$. After computing the metric
representing the total sum of $K-1$ LIF values in
(\ref{eq.leakage_of_IA_sum}), each user feeds back the value to its
home cell BS $i$.\footnote{An opportunistic feedback strategy can be
adopted in order to reduce the amount of feedback overhead without
any performance loss, similarly as in MIMO broadcast
channels~\cite{TangHeathChoYun}, even if the details are not shown
in this paper.} Thereafter, BS $i$ selects a set $\{\pi_i(1),
\ldots, \pi_i(M)\}$ of $M$ users who feed back the values up to the
$M$-th smallest one in (\ref{eq.leakage_of_IA_sum}), where
$\pi_i(j)$ denotes the index of users in cell $i$ whose value is the
$j$-th smallest one. The selected $M$ users in each cell start to
transmit their data packets.

At the receiver side, each BS performs a simple ZF filtering based
on intra-cell channel vectors to detect the signal from its home
cell users, which is sufficient to capture the full DoFs in our
model. The resulting signal (symbol), postprocessed by ZF matrix
$\mathbf{G}_i\in\mathbb{C}^{M\times M}$ at BS $i$, is then given by
\begin{eqnarray} \label{eq.x_symbol}
\bmat \hat{x}_1^{(i)}&\cdots& \hat{x}_{M}^{(i)}
\emat^{T}=\mathbf{G}_i \mathbf{y}_i,
\end{eqnarray}
where
\begin{eqnarray}
\mathbf{G}_i\!\!\!\!\!\!\!&&= \bmat \bar{\mathbf{g}}_1^{(i)} ~
\cdots ~ \bar{\mathbf{g}}_M^{(i)} \emat^T \nonumber\\ &&= \bmat
\mathbf{h}_{i,1}^{(i)} ~ \cdots ~ \mathbf{h}_{i,M}^{(i)}
\emat^{\dagger} \nonumber
\end{eqnarray}
and $\bar{\mathbf{g}}_m^{(i)}\in\mathbb{C}^{M\times1}$
($m=1,\cdots,M$) is the ZF column vector.

\subsubsection{OIA Protocol}

The fact that the OIN scheme needs a great number of per-cell users
motivates the introduction of an OIA protocol in which $S$
transmitting users are selected in each cell for
$S\in\{1,\cdots,M-1\}$. The OIA scheme is now described as follows.
First, BS $i$ in the $i$-th cell generates a set of orthonormal
random vectors $\mathbf{v}_m^{(i)} \in \mathbb{C}^{M \times 1}$ for
all $m=1,\cdots,M-S$ and $i=1,\cdots,K$, where $\mathbf{v}_m^{(i)}$
corresponds to its pre-defined interference direction, and then
broadcasts the random vectors to all the users in other
cells.\footnote{Alternatively, a set of vectors can be generated
with prior knowledge in a pseudo-random manner, and thus can be
acquired by all users before data transmission without any signaling
overhead.} That is, the interference subspace is broadcasted. If
$m_1=m_2$, then $\mathbf{v}_{m_1}^{(i) H}\mathbf{v}_{m_2}^{(i)}=1$
for $m_1,m_2\in\{1,\cdots,M-1\}$. Otherwise, it follows that
$\mathbf{v}_{m_1}^{(i) H}\mathbf{v}_{m_2}^{(i)}=0$. For example, if
$M-S$ is set to 1, i.e., single interference dimension is used, then
$M-1$ users in a cell are selected to transmit their data packets
simultaneously. This can be easily extended to the case where a
multi-dimensional subspace is allowed for IA (e.g., $M-S\ge2$).

With this scheme, it is important to see how closely the channels of
selected users are aligned with the {\em span} of broadcasted
interference vectors. To be specific, let $\{{\bf
u}_1^{(i)},\cdots,{\bf u}_{S}^{(i)}\}$ denote an orthonormal basis
for the null space $U^{(i)}$ (i.e., kernel) of the interference
subspace. User $j\in\{1,\cdots,N\}$ in the $i$-th cell then computes
the orthogonal projection onto $U^{(k)}$ of its channel vector
$\mathbf{h}_{k,j}^{(i)}$, which is given by
\begin{equation}
\text{Proj}_{U^{(k)}}\left(\mathbf{h}_{k,j}^{(i)}\right)=\sum_{m=1}^{S}\left({\bf
u}_m^{(k)H}\mathbf{h}_{k,j}^{(i)}\right){\bf u}_m^{(k)}, \nonumber
\end{equation}
and the value
\begin{eqnarray}\label{eq.leakage_of_IA}
L_{k,j}^{i}=
\left\|\text{Proj}_{U^{(k)}}\left(\mathbf{h}_{k,j}^{(i)}\right)\right\|^2,
\end{eqnarray}
which can be interpreted as the LIF in the OIA scheme, for
$k\in\{1,\cdots,i-1,i+1,\cdots,K\}$. For example, if the LIF of a
user is given by $0$ for a certain another BS
$k\in\{1,\cdots,i-1,i+1,\cdots,K\}$, then it indicates that the
user's channel vectors are perfectly aligned to the interference
direction of BS $k$ and the user's signal does not interfere with
signal detection at the BS. For user $j$ in the $i$-th cell, the
user scheduling metric $L_{j}^{i}$ is finally given by
(\ref{eq.leakage_of_IA_sum}), as in the OIN protocol. The remaining
scheduling steps are the same as those of OIN except that a set
$\{\pi_i(1), \ldots, \pi_i(S)\}$ of $S$ users is selected at BS $i$
instead of $M$ users.

A ZF filtering at BS $i$ is performed based on both random vectors
$\{\mathbf{v}_1^{(i)},\cdots,\mathbf{v}_{M-S}^{(i)}\}$ and the
intra-cell channel vectors
$\{\mathbf{h}_{i,1}^{(i)},\cdots,\mathbf{h}_{i,S}^{(i)}\}$. Then,
the resulting signal, postprocessed by ZF matrix
$\mathbf{G}_i\in\mathbb{C}^{S\times M}$, is given by
\begin{eqnarray}
\bmat \hat{x}_1^{(i)}&\cdots& \hat{x}_{S}^{(i)}
\emat^{T}=\mathbf{G}_i \mathbf{y}_i, \nonumber
\end{eqnarray}
where
\begin{eqnarray}
\mathbf{G}_i\!\!\!\!\!\!\!&&= \bmat \bar{\mathbf{g}}_1^{(i)} ~
\cdots ~ \bar{\mathbf{g}}_S^{(i)} \emat^T \nonumber\\ &&=
\bmat\mathbf{h}_{i,1}^{(i)}&\cdots&\mathbf{h}_{i,S}^{(i)}\emat^{\dagger}
\nonumber
\end{eqnarray}
and $\bar{\mathbf{g}}_m^{(i)}\in\mathbb{C}^{M\times1}$
($m=1,\cdots,S$) is the ZF column vector.

\subsection{Analysis of Achievable DoFs} \label{sect:DoFs}

In this subsection, we show that the OIM scheme with $S$
simultaneously transmitting users per cell achieves the total number
$KS$ of DoFs asymptotically. The achievability is conditioned by the
scaling behavior between the number $N$ of per-cell users and the
received SNR.

The total number $\mathrm{dof}_{\mathrm{total}}$ of DoFs is defined
as~\cite{ZhengTse}
\begin{align} \label{EQ:DoFdef}
\mathrm{dof}_{\mathrm{total}} &= \sum_{i=1}^K \sum_{j=1}^N d_j^{(i)}
\nonumber\\ &= \sum_{i=1}^K \sum_{j=1}^N \left(\lim_{\mathrm{SNR}
\to \infty} \frac{R_j^{(i)}(\mathrm{SNR})}{\log
\mathrm{SNR}}\right),
\end{align}
where $d_{j}^{(i)}$ and $R_j^{(i)}(\text{SNR})$ denote the DoFs and
the rate, respectively, for the transmission of user
$j\in\{1,\cdots,N\}$ in the $i$-th cell
($i=1,\cdots,K$).\footnote{Especially, the definition of DoFs
associated with the IMAC model was shown
in~\cite{MotahariGharanMaddah-AliKhandani}, and is basically the
same as~(\ref{EQ:DoFdef}).} Note that under the OIM protocol,
$\mathrm{dof}_{\mathrm{total}}$ is then lower-bounded by
\begin{equation} \label{eq.sum-rate}
\mathrm{dof}_{\mathrm{total}} \ge \sum_{i=1}^K \sum_{m=1}^{S}
\left(\lim_{\mathrm{SNR} \to \infty}\frac{ \log \left( 1 +
\mathrm{SINR}_{i,m} \right) }{\log \mathrm{SNR}}\right),
\end{equation}
where $\mathrm{SINR}_{i,m}$ denotes the
signal-to-interference-and-noise ratio (SINR) for the desired stream
$m\in\{1,\cdots,S\}$ at the receiver (BS) in the $i$-th cell and is
represented by
\begin{align}
\mathrm{SINR}_{i,m}&=\frac{\left|\bar{\mathbf{g}}_m^{(i) H}
\mathbf{h}_{i,\pi_i(m)}^{(i)} \right|^2  \mathrm{SNR}}{1 +
\sum_{k=1, k \neq i}^K \sum_{j=1}^{S} \left|\bar{\mathbf{g}}_m^{(i)
H} \mathbf{h}_{i,\pi_k(j)}^{(k)} \right|^2 \mathrm{SNR}} \nonumber\\
&\geq \frac{\left|\bar{\mathbf{g}}_m^{(i) H}
\mathbf{h}_{i,\pi_i(m)}^{(i)} \right|^2  \mathrm{SNR}}{1 +
\sum_{k=1, k \neq i}^K \sum_{j=1}^{S} \left\|
\bar{\mathbf{g}}_m^{(i) H} \right\|^2L_{i,\pi_k(j)}^{k} \mathrm{SNR}} \nonumber \\
&=\frac{ \frac{\left|\bar{\mathbf{g}} _m^{(i) H}
\mathbf{h}_{i,\pi_i(m)}^{(i)} \right|^2}{\left\|
\bar{\mathbf{g}}_m^{(i) H} \right\|^2} \mathrm{SNR}}{1 + \sum_{k=1,
k \neq i}^K \sum_{j=1}^{S} L_{i,\pi_k(j)}^{k} \mathrm{SNR}},
\label{eq.SINRim}
\end{align}
where $L_{i,\pi_k(j)}^{k}$ is given by (\ref{eq.leakage_of_IN}) and
(\ref{eq.leakage_of_IA}) when $S=M$ and $S\in\{1,\cdots,M-1\}$,
respectively. Here, the inequality holds due to the Cauchy-Schwarz
inequality. Now our focus is to characterize the LIF
$L_{i,\pi_k(j)}^{k}$ in order to quantify the achievable total DoFs
$\mathrm{dof}_{\mathrm{total}}$. Since the $M$-dimensional SIMO
channel vector $\mathbf{h}_{i,\pi_k(j)}^{(k)}$ is isotropically
distributed, the user scheduling metric $L^i_j$, representing the
total sum of $K-1$ LIF values, follows the chi-square distribution
with $2(K-1)S$ degrees of freedom for any $i=1,\cdots, K$ and $j =
1,2,\ldots,N$. The cumulative distribution function (cdf) $F_L(l)$
of the metric $L^i_j$ is given by
\begin{align} \label{EQ:F_L}
F_L(l) = \frac{\gamma{((K-1)S,l/2)}}{\Gamma((K-1)S)},
\end{align}
where $\Gamma(z) = \int_0^{\infty} t^{z-1}e^{-t}dt$ is the Gamma
function and $\gamma(z,x) = \int_0^{x} t^{z-1}e^{-t}dt$ is the lower
incomplete Gamma function. We start from the following lemma.

\begin{lemma}\label{lem:gammafunc}
For any $0 \leq l < 2$, the cdf $F_L(l)$ of the metric $L^i_j$ in
(\ref{eq.leakage_of_IA_sum}) is lower- and upper-bounded by
\begin{align}
C_1 l^{(K-1)S} \leq F_L(l) \leq C_2 l^{(K-1)S}, \label{eq:lemma1}
\end{align}
where
\begin{align}
C_1 &= \frac{e^{-1}2^{-(K-1)S}}{(K-1)S \cdot
\Gamma\left((K-1)S\right)}, \nonumber
\end{align}
\begin{align}
C_2 &=
\frac{2\cdot2^{-(K-1)S}}{(K-1)S\cdot\Gamma\left((K-1)S\right)},
\nonumber
\end{align}
and $\Gamma(z)$ is the Gamma function.
\end{lemma}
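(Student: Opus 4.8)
The plan is to bound the lower incomplete Gamma function $\gamma((K-1)S,\, l/2)$ directly, by sandwiching the exponential factor inside its integral representation and exploiting that the upper limit $l/2$ lies in $[0,1)$ whenever $0 \le l < 2$. Write $a = (K-1)S$, so that $F_L(l) = \gamma(a, l/2)/\Gamma(a)$ with $\gamma(a,x) = \int_0^x t^{a-1} e^{-t}\,dt$. First I would observe that for every $t$ in the range of integration, i.e. $t \in [0, l/2] \subseteq [0,1)$, one has $e^{-1} \le e^{-t} \le 1$; this uniform two-sided bound on the integrand's exponential is the single place where the hypothesis $l < 2$ is actually used.

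Next I would substitute these bounds into the integral and pull the constants out, obtaining $e^{-1}\int_0^{l/2} t^{a-1}\,dt \le \gamma(a, l/2) \le \int_0^{l/2} t^{a-1}\,dt$. The remaining monomial integral is elementary: since $a = (K-1)S$ is a positive integer (so $t^{a-1}$ is integrable at the origin), $\int_0^{l/2} t^{a-1}\,dt = (l/2)^a/a = 2^{-a} l^a / a$. Dividing by $\Gamma(a)$ then yields
\[
\frac{e^{-1}2^{-a}}{a\,\Gamma(a)}\, l^a \;\le\; F_L(l) \;\le\; \frac{2^{-a}}{a\,\Gamma(a)}\, l^a .
\]
Restoring $a = (K-1)S$, the left coefficient is precisely $C_1$, which establishes the lower bound. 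For the upper bound it suffices to note that $\frac{2^{-a}}{a\,\Gamma(a)} \le \frac{2\cdot 2^{-a}}{a\,\Gamma(a)} = C_2$, so the stated inequality holds with a factor of $2$ to spare (the estimate $e^{-t}\le 1$ already delivers the sharper constant $C_2/2$, and the extra factor is harmless slack).

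I do not expect a genuine obstacle here: the argument is a one-line squeeze on $e^{-t}$ followed by integrating a power of $t$. The only points requiring a moment's care are (i) checking that $0 \le l < 2$ forces $l/2 \in [0,1)$, so the uniform estimates on $e^{-t}$ are valid over the whole integration interval, and (ii) recording that $(K-1)S$ is a positive integer — implicitly $K \ge 2$ in the regime of interest — which both makes the incomplete Gamma integral converge at $0$ and keeps $C_1, C_2$ finite. If sharper constants were desired one could instead expand $\gamma(a,x) = x^a \sum_{n \ge 0} (-x)^n / (n!\,(a+n))$ and control the alternating tail, but the crude sandwich already gives bounds of exactly the required monomial form $C\, l^{(K-1)S}$, which is all that the subsequent SINR/DoF analysis needs.
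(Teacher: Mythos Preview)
Your proof is correct and in fact slightly sharper than the paper's: you obtain the upper bound with constant $C_2/2$ rather than $C_2$, so the stated inequality follows a fortiori.

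The route, however, differs from the paper's. Where you sandwich $e^{-t}$ by constants on $[0,1)$ directly in the integral $\int_0^{l/2} t^{a-1}e^{-t}\,dt$ and then integrate the monomial, the paper instead invokes the integration-by-parts recursion $\gamma(z,x)=\tfrac{1}{z}x^{z}e^{-x}+\tfrac{1}{z}\gamma(z+1,x)$: iterating gives the series $\tfrac{1}{z}x^{z}e^{-x}\sum_{i\ge 0}\prod_{r=1}^{i}\tfrac{x}{z+r}$, from which the lower bound follows by keeping only the first term (and using $e^{-x}\ge e^{-1}$), and the upper bound follows by dominating the tail with a geometric series $\sum_{i\ge 1}(x/(z+1))^{i}$. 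Your direct integral squeeze is the more elementary and transparent argument for the stated range $0\le l<2$, and it delivers a better constant; the paper's series expansion, on the other hand, makes the asymptotic structure $\gamma(z,x)\sim x^{z}/z$ explicit and would extend more readily to wider ranges of $x$ (the geometric tail bound is valid whenever $x<z+1$). For the purposes of Theorem~\ref{THM:DOF_lower} either argument suffices, since only the monomial order $l^{(K-1)S}$ matters.
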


The proof of this lemma is presented in Appendix~\ref{PF:gammafunc}.
It is now possible to derive the achievable DoFs for $K$-cell uplink
networks using the OIM protocol.

\begin{theorem} \label{THM:DOF_lower}
Suppose that the OIM scheme with $S$ simultaneously transmitting
users in a cell is used in the IMAC model. Then,
\begin{equation}
\mathrm{dof}_{\mathrm{total}}\ge KS \label{eq.achievableDoF}
\end{equation}
is achievable with high probability (whp), if $N = \omega \left(
\mathrm{SNR}^{(K-1)S} \right)$, where
$S=\{1,\cdots,M\}$.\footnote{We use the following notations: i)
$f(x)=O(g(x))$ means that there exist constants $C$ and $c$ such
that $f(x)\le Cg(x)$ for all $x>c$. ii) $f(x)=\omega(g(x))$ means
that
$\underset{x\rightarrow\infty}\lim\frac{g(x)}{f(x)}=0$~\cite{Knuth}.}
\end{theorem}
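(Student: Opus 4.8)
The plan is to show that, under $N=\omega(\mathrm{SNR}^{(K-1)S})$, every one of the $KS$ postprocessed streams achieves one DoF; by the lower bound \eqref{eq.sum-rate} this yields $\mathrm{dof}_{\mathrm{total}}\ge KS$. (When $SK\le M$ the claim is immediate: plain ZF at each BS on all $SK$ intra- and inter-cell channels already delivers $KS$ DoFs with no scaling of $N$, so I assume $SK>M$.) I will bound the numerator and the denominator of the SINR lower bound in \eqref{eq.SINRim} separately, working on a high-probability event (probability tending to one as $\mathrm{SNR}\to\infty$, with $N=N(\mathrm{SNR})$ growing as prescribed).

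For the numerator: the ZF receiver is, by construction, orthogonal to the remaining intra-cell channels of the selected users and---in the OIA case---to the broadcast interference directions $\mathbf{v}_\cdot^{(i)}$, while having unit response to the desired user, so $\bar{\mathbf{g}}_m^{(i)H}\mathbf{h}_{i,\pi_i(m)}^{(i)}=1$ and the numerator equals $\mathrm{SNR}/\|\bar{\mathbf{g}}_m^{(i)}\|^2$. The key observation is that the scheduling metric $L_j^i$ of \eqref{eq.leakage_of_IA_sum} is a function only of the cross-channels $\mathbf{h}_{k,j}^{(i)}$ with $k\ne i$ (and, for OIA, the directions $\mathbf{v}_\cdot^{(k)}$ with $k\ne i$); it is statistically independent of the home-cell channels $\mathbf{h}_{i,1}^{(i)},\dots,\mathbf{h}_{i,N}^{(i)}$ and of $\mathbf{v}_\cdot^{(i)}$. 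Hence, conditioned on any realization of the selection $\{\pi_i(1),\dots,\pi_i(S)\}$, the selected home-cell channels are still i.i.d.\ Rayleigh, so $\|\bar{\mathbf{g}}_m^{(i)}\|^2$ has an $\mathrm{SNR}$-independent, almost-surely finite law. Thus there is a constant $B$ with $\|\bar{\mathbf{g}}_m^{(i)}\|^2\le B$ for all $i,m$ with probability approaching one, i.e.\ the numerator is at least $\mathrm{SNR}/B$.

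For the denominator I need $L^k_{i,\pi_k(j)}=O(1/\mathrm{SNR})$ whp for all $k\ne i$ and $j\le S$. Since $L^k_{i,\pi_k(j)}\le\sum_{i'\ne k}L^k_{i',\pi_k(j)}=L^k_{\pi_k(j)}$, it suffices that the $S$ smallest of the $N$ i.i.d.\ metrics $L^k_1,\dots,L^k_N$ all lie below a threshold $\eta=1/\mathrm{SNR}$ (which is $<2$ once $\mathrm{SNR}>1/2$, so Lemma~\ref{lem:gammafunc} applies). The number of cell-$k$ users with $L^k_j\le\eta$ is $\mathrm{Binomial}(N,F_L(\eta))$, and Lemma~\ref{lem:gammafunc} gives $F_L(\eta)\ge C_1\eta^{(K-1)S}=C_1\,\mathrm{SNR}^{-(K-1)S}$, so its mean is at least $C_1\,N\,\mathrm{SNR}^{-(K-1)S}$, which diverges exactly because $N=\omega(\mathrm{SNR}^{(K-1)S})$. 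A Chernoff bound then shows this count is at least $S$ with probability tending to one, and a union bound over the $K$ cells makes this simultaneously true. On that event every selected user has metric at most $\eta$, hence $L^k_{i,\pi_k(j)}\le 1/\mathrm{SNR}$ and the denominator of \eqref{eq.SINRim} is at most $1+(K-1)S$.

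Putting the pieces together, on the intersection of these high-probability events $\mathrm{SINR}_{i,m}\ge \mathrm{SNR}/(B(1+(K-1)S))$ for every $i\in\{1,\dots,K\}$ and $m\in\{1,\dots,S\}$, so each stream satisfies $\lim_{\mathrm{SNR}\to\infty}\log(1+\mathrm{SINR}_{i,m})/\log\mathrm{SNR}=1$; substituting in \eqref{eq.sum-rate} gives $\mathrm{dof}_{\mathrm{total}}\ge KS$ whp. The identical argument covers the OIN case ($S=M$, LIF \eqref{eq.leakage_of_IN}) and the OIA case ($S<M$, LIF \eqref{eq.leakage_of_IA}), since in both the metric is $\chi^2$ with $2(K-1)S$ degrees of freedom and Lemma~\ref{lem:gammafunc} is used verbatim. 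I expect the main obstacle to be the probabilistic core of the third paragraph: one must verify that the scheduling metric is genuinely independent of everything entering the signal power (so selection cannot bias $\|\bar{\mathbf{g}}_m^{(i)}\|^2$), and must pick $\eta$ small enough to keep the residual interference bounded yet large enough that the expected number of admissible users still diverges under the prescribed scaling of $N$; the subsequent union bounds over the constantly many cells and streams are routine.
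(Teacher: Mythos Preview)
Your proposal is correct and follows essentially the same route as the paper: reduce to showing that the $S$-th smallest scheduling metric in each cell is $O(\mathrm{SNR}^{-1})$ with probability tending to one, and use Lemma~\ref{lem:gammafunc} together with $N=\omega(\mathrm{SNR}^{(K-1)S})$ to establish this. The only cosmetic differences are that you bound the order-statistic tail via a Binomial count plus a Chernoff bound, whereas the paper writes out the order-statistic cdf $1-\sum_{i=0}^{S-1}\binom{N}{i}F_L(\eta)^i(1-F_L(\eta))^{N-i}$ directly and bounds each summand with Lemma~\ref{lem:gammafunc}; and you make explicit the independence of the scheduling rule from the home-cell channels (hence from $\|\bar{\mathbf g}_m^{(i)}\|^2$), a point the paper leaves implicit when asserting $\mathrm{SINR}_{i,m}=\Omega(\mathrm{SNR})$.
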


\begin{proof}
From (\ref{eq.sum-rate}) and (\ref{eq.SINRim}), the OIM scheme
achieves $KS$ DoFs if the value
\begin{equation} \label{EQ:INR}
\sum_{k=1, k \neq i}^K \sum_{j=1}^{S} L_{i,\pi_k(j)}^{k}
\mathrm{SNR}
\end{equation}
for all $i \in \{1,2,\ldots,K\}$ and $m \in \{1,2,\ldots, S\}$ is
smaller than or equal to some constant $\epsilon>0$ independent of
SNR. The number $\mathrm{dof}_{\mathrm{total}}$ of DoFs is
lower-bounded by
\begin{align}
\mathrm{dof}_{\mathrm{total}} \geq P_{\mathrm{OIM}}KS, \nonumber
\end{align}
which holds since $KS$ DoFs are achieved for a fraction
$P_{\mathrm{OIM}}$ of the time, from the fact that
$\mathrm{SINR}_{i,m}=\Omega(\mathrm{SNR})$ with probability
$P_{\mathrm{OIM}}$, where
\begin{align}
P_{\mathrm{OIM}} &=  \lim_{\mathrm{SNR} \to \infty} \mathrm{Pr}
\left\{\sum_{k=1, k \neq i}^K \sum_{j=1}^{S} L_{i,\pi_k(j)}^{k}
\mathrm{SNR} \leq \epsilon ~ \textrm{for all} ~ i \in
\{1,2,\ldots,K\}, m \in \{1,2,\ldots, S\} \right\}. \nonumber
\end{align}

We now examine the scaling condition such that $P_{\mathrm{OIM}}$
converges to one whp. For a constant $\epsilon>0$, we have
\begin{align}
P_{\mathrm{OIM}} &\geq \lim_{\mathrm{SNR} \to \infty} \mathrm{Pr}
\left\{ \sum_{i=1}^K \sum_{m=1}^S \sum_{k=1, k \neq i}^K
\sum_{j=1}^{S} L_{i,\pi_k(j)}^{k} \mathrm{SNR} \leq \epsilon \right\} \nonumber\\
&\geq \lim_{\mathrm{SNR} \to \infty} \mathrm{Pr} \left\{ S\sum_{k=1}^K \sum_{j=1}^{S} L^{k}_{\pi_k(j)} \leq \epsilon  \mathrm{SNR}^{-1} \right\} \nonumber \\
&\geq \lim_{\mathrm{SNR} \to \infty}\mathrm{Pr} \left\{ L^k_{\pi_k
(S)}
\leq \frac{\epsilon \mathrm{SNR}^{-1}}{KS^2} ~ \textrm{for all} ~ k \in \{1,\ldots,K\} \right\} \nonumber \\
&= \lim_{\mathrm{SNR} \to \infty}\left(\mathrm{Pr} \left\{
L^1_{\pi_1 (S)} \leq \frac{\epsilon \mathrm{SNR}^{-1}}{KS^2}
\right\}\right)^K, \label{eq:lowerbound}
\end{align}
where the last equality holds from the fact that if $i_1 \neq i_2$,
then $L^{i_1}_j$ and $L^{i_2}_j$ are given by a function of
different random vectors, and thus are independent of each other.
Then, (\ref{eq:lowerbound}) can further be lower-bounded by using
\begin{align}
& \lim_{\mathrm{SNR} \to \infty}\mathrm{Pr} \left\{ L^1_{\pi_1 (S)}
\leq \frac{\epsilon \mathrm{SNR}^{-1}}{KS^2} \right\} \nonumber \\
&= 1 - \lim_{\mathrm{SNR} \to \infty} \sum_{i=0}^{S-1} \left( N
\atop i \right) F_L \left( \frac{\epsilon \mathrm{SNR}^{-1}}{KS^2}
\right)^i  \left( 1-F_L
\left( \frac{\epsilon \mathrm{SNR}^{-1}}{KS^2} \right) \right)^{N-i} \nonumber\\
&\geq 1 - \lim_{\mathrm{SNR} \to \infty} \sum_{i=0}^{S-1} \frac{
\left( N  C_2  \left( \frac{\epsilon}{2KS^2} \right)^{(K-1)S}
\mathrm{SNR}^{-(K-1)S} \right)^i  \left( 1- C_1  \left(
\frac{\epsilon}{2KS^2} \right)^{(K-1)S} \mathrm{SNR}^{-(K-1)S}
\right)^{N}}{\left( 1 - C_2  \left( \frac{\epsilon}{2KS^2}
\right)^{(K-1)S} \mathrm{SNR}^{-(K-1)S} \right)^i},
\label{eq:lowerbound2} \nonumber
\end{align}
where the inequality holds due to Lemma \ref{lem:gammafunc}. If $N =
\omega \left( \mathrm{SNR}^{(K-1)S} \right)$, then the value
\begin{equation}
\left( N  C_2  \left( \frac{\epsilon}{2KS^2} \right)^{(K-1)S}
\mathrm{SNR}^{-(K-1)S} \right)^i  \left( 1- C_1  \left(
\frac{\epsilon}{2KS^2} \right)^{(K-1)S} \mathrm{SNR}^{-(K-1)S}
\right)^{N} \label{eq.twoterms}
\end{equation}
converges to zero for all $i=0,\cdots, S-1$, because in
(\ref{eq.twoterms}), the second term decays exponentially with
increasing SNR while the first term increases rather polynomially.
The lower bound in (\ref{eq:lowerbound}) thus converges to one.

As a consequence, our result indicates that the term $\sum_{k=1, k
\neq i}^K \sum_{j=1}^{S} L_{i,\pi_k(j)}^{k}$ scales as $O \left(
\mathrm{SNR}^{-1} \right)$ whp if $N = \omega \left(
\mathrm{SNR}^{(K-1)S} \right)$. This further implies that for the
decoded symbol $\hat{x}^{(i)}_m$, the value in (\ref{EQ:INR}) is
smaller than or equal to $\epsilon$ with probability
$P_{\mathrm{OIM}}$, approaching one, as the received SNR tends to
infinity, where $i\in\{1,\cdots,K\}$ and $m \in\{1,\cdots,S\}$.
Therefore, it follows that $\mathrm{dof}_{\mathrm{total}}\geq KS$ if
$N = \omega \left( \mathrm{SNR}^{(K-1)S} \right)$, which completes
the proof of this theorem.
\end{proof}

From the above theorem, let us show the following interesting
discussion according to the two proposed protocols.

\begin{remark}
It is seen that the asymptotically achievable DoFs are given by $KM$
and $KS$ ($S\in\{1,\cdots,M-1\}$) when the OIN and OIA protocols are
used in $K$-cell uplink networks, respectively. In fact, the OIN
scheme achieves the optimal DoFs, which will be proved in
Section~\ref{sect:Upperbound_DoF} by showing an upper bound on the
DoFs, while it works under the condition that the required number
$N$ of users per cell scales faster than $\mathrm{SNR}^{(K-1)M}$. On
the other hand, the OIA scheme operates with at least
$\mathrm{SNR}^{(K-1)S}$ users per cell, which are surely smaller
than those of the OIN scheme, at the expense of some DoF loss. This
thus gives us a trade-off between the achievable number of DoFs and
the required number $N$ of users in a cell. Note that for the case
where $N$ is not sufficiently large to utilize the OIN scheme, the
OIA scheme can instead be applied in the networks.
\end{remark}

It is now examined how our scheme is fundamentally different from
the existing DoF-optimal
schemes~\cite{Jafar_IA_original,Jafar_IA_MIMO,Jafar_IA_X_channel,Jafar_Shamai,Tse_IA,Ergodic_Viswanath}.

\begin{remark}
As addressed before, the minimum number $N$ of per-cell users needs
to be guaranteed in order that the proposed OIM protocols work
properly even in the time-invariant channel condition without any
dimension expansion. On the other hand,
in~\cite{Jafar_IA_original,Jafar_IA_MIMO,Jafar_IA_X_channel,Jafar_Shamai,Tse_IA,Ergodic_Viswanath},
a huge number of dimensions are required to asymptotically achieve
the optimal DoFs.
\end{remark}

%


\section{Upper Bound for DoFs} \label{sect:Upperbound_DoF}

In this section, to verify the optimality of the proposed OIN
scheme, we derive an upper bound on the DoFs in cellular networks,
especially for the IMAC model shown in Fig. \ref{FIG:system}.
Suppose that $\tilde{N}$ users (i.e., $N$ streams) per cell transmit
their packets simultaneously to the corresponding BS, where
$\tilde{N}\in\{1, 2,\cdots, N\}$.\footnote{Note that $\tilde{N}$ is
different from $S$ in Section~\ref{sect:System_model} since
$\tilde{N}$ can be greater than $M$ in general.} This is a
generalized version of the transmission since it is not
characterized how many users in a cell need to transmit their
packets simultaneously to obtain the optimal DoFs.
An upper bound on the total DoFs for the IMAC model is given in the
following theorem.

\begin{theorem} \label{THM:upper}
For the IMAC model shown in Section~\ref{sect:System_model}, the
total number $\mathrm{dof}_{\mathrm{total}}$ of DoFs is
upper-bounded by
\begin{eqnarray}
\mathrm{dof}_{\mathrm{total}} =
\sum_{i=1}^{K}\sum_{j=1}^{N}d_{j}^{(i)}\le \frac{KNM}{N+1},
\label{EQ:upper}
\end{eqnarray}
where $d_{j}^{(i)}$ denotes the DoFs for the transmission of user
$j$ in the $i$-th cell for $i=1,\cdots,K$ and $j=1,\cdots,N$.
\end{theorem}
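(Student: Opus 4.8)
The plan is to obtain the bound by a genie-aided / cut-set style argument, reducing the $K$-cell IMAC to a collection of point-to-point MIMO links whose DoFs are easy to count, and then exploiting the symmetry among the $N$ users in a cell. First I would fix a cell, say cell $1$, and argue that an upper bound on the DoFs of that cell's users follows by giving a genie all the messages and transmit signals of the other $K-1$ cells; once those are removed, BS~$1$ sees only a clean $M$-receive-antenna SIMO-MAC from its own $\tilde N$ users (sending $N$ streams total if we let each of the $N$ users be active), so $\sum_{j=1}^N d_j^{(1)}\le M$. That alone only gives $\mathrm{dof}_{\mathrm{total}}\le KM$, which is the weaker, $N$-independent statement; the factor $N/(N+1)$ must come from a sharper accounting that couples a user's stream with the cross-cell interference it creates.

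The key idea for the $N/(N+1)$ factor is a symmetry-and-averaging argument across the $N$ users of a cell. Concretely, I would consider, for each cell $i$ and each user index $j$, the ``reduced'' network in which user $j$ in cell $i$ is silenced (or, dually, in which only user $j$ transmits while a genie supplies the rest). Summing the per-cell SIMO-MAC bound over the $N$ choices of which single user to emphasize, together with the observation that the total received signal space at each of the $K$ BSs has dimension only $M$, yields $N+1$ copies of constraints that jointly bound $(N+1)\sum_{i,j} d_j^{(i)}$ by $KNM$. In more detail: a bound of the form ``$\sum_{j\ne j_0} d_j^{(i)} + (\text{interference DoFs of }j_0) \le M$'' holds at BS $i$ for each excluded user $j_0$; adding these $N$ inequalities gives $(N-1)\sum_j d_j^{(i)} + \sum_{j_0}(\text{interference DoFs}) \le NM$, and the interference terms, when summed across all cells, must themselves account for enough dimensions that the $\frac{KNM}{N+1}$ bound pops out after rearranging. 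The cleanest route is probably: establish that any achievable DoF tuple $(d_j^{(i)})$ must satisfy, for every $i$, $\sum_{j=1}^N d_j^{(i)} \le M$ \emph{and} a second family of inequalities reflecting that streams surviving at the home BS simultaneously occupy interference dimensions at some foreign BS; combine via a weighted sum with weights $1$ and $1/N$.

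Alternatively, and perhaps more transparently, I would use a direct converse on a single stream: isolate user $j$ of cell $i$, hand a genie the signals of all other users everywhere, and note its point-to-point SIMO channel to BS~$i$ gives $d_j^{(i)}\le 1$; then, to see where $N+1$ enters, observe that the $N$ streams of cell~$i$ plus the aggregate interference they jointly impose on the other cells cannot exceed what $M$-dimensional receivers can absorb, and that by the i.i.d.\ isotropy of the channels no stream can be ``for free'' at every foreign BS. Writing $d = \frac{1}{KN}\sum_{i,j} d_j^{(i)}$ for the symmetric per-stream DoF and imposing these constraints forces $N d + d \le M$, i.e.\ $d \le \frac{M}{N+1}$, whence $\mathrm{dof}_{\mathrm{total}} \le KN d \le \frac{KNM}{N+1}$.

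The main obstacle I anticipate is making the ``interference occupies dimensions'' step rigorous in the DoF (pre-log) sense: it is intuitively clear that a stream which is resolvable at its home BS must, generically, be linearly independent of the desired subspace at a foreign BS and hence consume an interference dimension there, but turning this into a valid entropy/Fano-type inequality — rather than a hand-wave about generic channel matrices — requires care, since DoF converses must be information-theoretic and hold for \emph{all} coding schemes, not just linear ones. I expect the paper handles this by a genie argument that converts the foreign-BS observation into a degraded version that must decode a subset of messages, forcing the dimension count; getting the bookkeeping of which messages the genie provides exactly right (so that the arithmetic yields $N+1$ and not $N$ or $2N$) is the delicate part.
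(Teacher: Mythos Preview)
Your intuition is pointed in the right direction --- the $N/(N+1)$ factor does come from coupling a cell's own streams with the interference a \emph{single foreign user} creates --- but the concrete inequalities you write down are not quite the right ones, and the ``interference occupies dimensions'' step you flag as delicate is indeed the whole proof, not a detail to be filled in later. In particular, your proposed bound ``$\sum_{j\ne j_0} d_j^{(i)} + (\text{interference DoFs of }j_0) \le M$'' with $j_0$ a user of the \emph{same} cell $i$ does not lead anywhere: user $j_0$ causes no interference at its own BS, and there is no clean information-theoretic quantity playing the role of ``interference DoFs of $j_0$'' that you can sum.

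What the paper actually does is a \emph{two-cell} reduction. Eliminate all messages in cells $3,\ldots,K$ and all but one user (say user $k$) in cell~$2$; this can only enlarge the capacity region. In the resulting network there are $N$ users in cell~$1$, one user in cell~$2$, and two $M$-antenna receivers. The crux is an information-theoretic argument (Fano plus a noise-reduction step after left-multiplying $\mathbf{y}_2$ by $\mathbf{h}_{1,1}^{(2)}(\mathbf{h}_{2,1}^{(2)\dagger}\mathbf{h}_{2,1}^{(2)})^{-1}\mathbf{h}_{2,1}^{(2)\dagger}$) showing that the sum of all $N+1$ rates is bounded by the capacity of a \emph{single} $M$-antenna MAC, hence
\[
\sum_{j=1}^{N} d_j^{(1)} + d_k^{(2)} \le M \qquad \text{and, symmetrically,} \qquad d_k^{(1)} + \sum_{j=1}^{N} d_j^{(2)} \le M.
\]
Summing each family over $k=1,\ldots,N$ and adding gives $(N+1)\bigl(\sum_j d_j^{(1)} + \sum_j d_j^{(2)}\bigr) \le 2NM$, and repeating over all cell pairs yields $\mathrm{dof}_{\mathrm{total}} \le KNM/(N+1)$. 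So the ``$Nd + d \le M$'' you wrote at the end is exactly right in spirit, but the extra $d$ is one user from a \emph{different} cell, and the justification is not a generic-linear-independence heuristic but the channel-transformation/noise-reduction trick (in the style of Jafar--Fakhereddin) that merges the two receivers' observations into one $M$-dimensional MAC.
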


The proof of this theorem is presented in Appendix~\ref{PF:upper}.
Note that this upper bound is generally derived regardless of
whether the number $N$ of users per cell tends to infinity or not.
Thus, our converse result always holds for arbitrary $N$, whereas
the scaling condition $N=\omega(\text{SNR}^{(K-1)M})$ is included in
the achievability proof. Now let us turn to examining how the upper
bound is close to the achievable DoFs shown in
Section~\ref{sect:Proposed}.

\begin{remark}
From Theorems~\ref{THM:DOF_lower} and~\ref{THM:upper}, when the OIN
scheme is used (i.e., the case of $S=M$), it is shown that the upper
bound on the DoFs matches the achievable DoFs as long as $N$ scales
faster than $\text{SNR}^{(K-1)M}$. Therefore, the proposed OIN
scheme is optimal in terms on DoFs.
\end{remark}

In addition, a simple upper bound can also be derived in the
following argument.

\begin{remark}
From a genie-aided removal of all the inter-cell interferences, we
obtain $K$ parallel SIMO MAC systems. The number of total DoFs is
thus upper-bounded by $KM$ due to the fact that the number of DoFs
for the SIMO MAC is given by $M$~\cite{Tse_book, Viswanath_Upper}.
It is seen that the upper bound in (\ref{EQ:upper}) approaches $KM$
as the number $N$ of users per cell tends to infinity.
\end{remark}

\section{\label{sect:Numerical_EX} Discussions}

Some important aspects for the proposed scheme are discussed in this
section. We first perform computer simulations to validate the
performance of the proposed OIA scheme in cellular networks. A
two-step user scheduling method is also introduced with a slight
modification, where a logarithmic gain can be obtain. Furthermore,
we show that our achievable scheme can be extended to multi-carrier
systems by executing dimension expansion over the frequency domain.

\subsection{\label{subsect:Simulations} Numerical Evaluation}

The average amount of interference leakage is evaluated as the
number $N$ of users in each cell increases. In our simulation, the
channel vectors in (\ref{eq.receive_vector_BS_i}) are generated
$1\times10^5$ times for each system parameter.

In Fig.~\ref{FIG:leakage_K_2}, The log-log plot of interference
leakage versus $N$ is shown as $N$ increases.\footnote{Even if it
seems unrealistic to have a great number of users in a cell, the
range for parameter $N$ is taken into account to precisely see some
trends of curves varying with $N$.} The interference leakage is
interpreted as the total interference power remaining in each
desired signal space (from the users in other cells) after the ZF
filter is applied, assuming that the received signal power from a
desired transmitter is normalized to 1 in the signal space. This
performance measure enables us to measure the quality of the
proposed OIA scheme, as shown in~\cite{Jafar_IA_distributed,YuSung}.
We now evaluate the interference leakage for various system
parameters. In Fig.~\ref{FIG:leakage_K_2}, the case with $M=8$,
$K=2$, and $SK>M$ is considered, where $S$ denotes the number of
simultaneously transmitting users per cell. It is shown that when
the parameter $S$ varies from 7 to 5, the interference leakage
decreases due to less interferers, which is rather obvious. The
result, illustrated in Fig.~\ref{FIG:leakage_K_2}, indicates that
the interference leakage tends to decrease linearly with $N$, while
the slopes of the curves are almost identical to each other as $N$
increases. It is further seen how many users per cell are required
to guarantee that the interference leakage is less than an
arbitrarily small $\epsilon>0$ for given parameters $M$, $S$, and
$K$.

\subsection{\label{subsect:MUDgain} Two-step OIN Protocol}
The main result of the paper states that the OIN scheme
asymptotically achieves the optimal DoFs in $K$-cell uplink
networks. Users are opportunistically selected in the sense of
confining the generating interference power to other cell BSs within
a constant independent of SNR, while the other opportunistic
algorithms aim to obtain the MUD gain by selecting users with the
maximum channel gain. We now introduce a two-step opportunistic
scheduling method that enables to obtain an additional logarithmic
gain, i.e., power gain, similarly as
in~\cite{Knopp_Opp,Viswanath_Opp,Hassibi_RBF}, as well as the full
DoF gain.

\begin{itemize}
\item {\em Step 1:} For the $i$-th cell, $\tilde{M}$ users are first selected according to the user scheduling metric
$L_j^{i}$ in (\ref{eq.leakage_of_IA_sum}), where
$\tilde{M}=\omega(M)$ and $i=1,\cdots,K$. That is, the parameter
$\tilde{M}$ needs to scale as a certain function of increasing SNR.
\item {\em Step 2:} Among the $\tilde{M}$ users, $M$ users with the desired channel gains up to the
$M$-th largest one are then chosen based on the metric $\|{\bf
h}_{i,\pi_i'(j)}^{(i)}\|^2$, where $\pi_i'(j)$ denotes the index of
users selected in the first step in cell $i$ for
$j=\{1,\cdots,\tilde{M}\}$.
\end{itemize}

From Theorem~\ref{THM:DOF_lower}, it is easily shown that if
$N=\omega(\mathrm{SNR}^{(K-1)\tilde{M}})$, then the interference in
each desired signal space from $\tilde{M}$ selected users per cell
is confined within a constant independent of SNR. Hence, similarly
as in~\cite{Hassibi_RBF}, the received SNR for each symbol would be
boosted by $\log \tilde{M}$ whp, compared to that shown in
(\ref{eq.x_symbol}), under the condition $\tilde{M}=\omega(M)$. As
$\tilde{M}$ scales with SNR (or equivalently $N$), the scaling laws
of the sum-rate in (\ref{eq.sum-rate}) can be obtained with respect
to $\tilde{M}$, and thus the achievable sum-rate scales as
\begin{equation}
KM\log \left(\text{SNR}\log\tilde{M}\right) \nonumber
\end{equation}
whp.\footnote{The pre-log term can be more boosted when $\tilde{M}$
scales exponentially with SNR (or faster), but this infeasible
scaling condition is not a matter of interest in this work.} Hence,
note that the above two-step procedure leads to performance
improvement on the sum-rate (but not on the DoFs).

\subsection{\label{subsect:discussions} Extension to Multi-carrier Systems}
The OIM scheme can easily be applied to multi-carrier systems by
executing dimension expansion over the frequency domain. Let
$N_{\text{sub}}$ denote the total number of subcarriers, which has
no need for tending to infinity. As a single antenna is simply
assumed at each BS in the multi-carrier environment, each user
transmits a data symbol using $N_{\text{sub}}$ frequency subcarriers
and the received signal vector
$\mathbf{y}_{i}\in\mathbb{C}^{N_{\text{sub}}\times1}$ over the
frequency domain at BS $i$ can then be expressed as
\begin{eqnarray} \label{EQ:received_OFDM}
\mathbf{y}_{i} = \sum_{j=1}^{S} \mathbf{H}_{i,j}^{(i)} x_{j}^{(i)} +
\sum_{k=1, k \neq i}^{K-1} \sum_{n=1}^{S}
\mathbf{H}_{i,n}^{(k)}x_{n}^{(k)} + \mathbf{z}_i, \nonumber
\end{eqnarray}
where $\mathbf{H}_{k,j}^{(i)} \in \mathbb{C}^{N_{\text{sub}}\times
1} $ indicates the frequency response of the channel from the $j$-th
user in the $k$-th cell to BS $i$, $\mathbf{z}_{i} \in
\mathbb{C}^{N_{\text{sub}}\times 1}$ is the AWGN vector over the
frequency domain at BS $i$, and $S\in\{1,\cdots,N_{\text{sub}}\}$ is
the number of users transmitting their data simultaneously in each
cell. We assume a rich scattering multipath fading environment and
thus all elements of $\mathbf{H}_{k,j}^{(i)}$ are assumed to be
statistically independent for all $i,k\in\{1,\cdots,K\}$ and
$j\in\{1,\cdots,N\}$.

For the OIN and OIA protocols under the multi-carrier model, the
user scheduling strategy and its achievability result almost follow
the same steps as those shown in Section~\ref{sect:Proposed}. Hence,
we mainly focus on the scenario where a beamforming can also be
performed at the transmitter side along with the user scheduling.

For example, when the OIA scheme is utilized, it is possible for
each user to reduce the amount of interference caused to the BSs in
other cells by generating a beamforming matrix and then adjusting
its vector directions, while no beamforming is available in
Section~\ref{sect:Proposed} since a single transmit antenna is used
at each user. The optimal diagonal weight matrix ${\bf
W}_{j}^{(i)}\in \mathbb{C}^{N_{\text{sub}}\times N_{\text{sub}}}$
can be designed at each user in the sense of minimizing the total
sum of $K-1$ LIF values defined in (\ref{eq.leakage_of_IA}), i.e.,
the metric $L_j^i$:
\begin{eqnarray} \label{EQ:opt_weight_matrix}
&& {\bf W}_{j}^{(i)} = \arg \min_{{\bf W} \in
\mathbb{C}^{N_{\text{sub}}\times N_{\text{sub}}} } \sum_{l=1, k \neq
i}^{K} \left\| \text{Proj}_{U^{(l)}}\left({\bf W} \mathbf{H}_{l,j}^{(i)}\right)\right\|^2 \\
&&\textrm{subject to} \quad \| \text{diag}({\bf W}) \|^2=1,
\nonumber
\end{eqnarray}
where $U^{(l)}$ denotes the null space of the interference subspace
in the $l$-th cell. Note that each user does not need to feed back
its optimal weight matrix in (\ref{EQ:opt_weight_matrix}) to its
home cell BS. Let ${\bf W}_{j, \text{opt}}^{(i)}$ denote the optimal
solution of (\ref{EQ:opt_weight_matrix}). The $j$-th user in the
$i$-th cell then feeds back the following scheduling metric
$\widetilde{L}_{j}^{i}$ that can be computed again by applying the
optimal weight matrix:
\begin{eqnarray} \label{EQ:novel_LIA}
\widetilde{L}_{j}^{i}= \sum_{l=1, k \neq i}^{K} \left\|
\text{Proj}_{U^{(l)}}\left(\mathbf{\tilde{H}}_{l,j}^{(i)}\right)\right\|^2,
\end{eqnarray}
where
\begin{eqnarray}
\mathbf{\tilde{H}}_{l,j}^{(i)} = {\bf W}_{j, \text{opt}}^{(i)}
\mathbf{H}_{l,j}^{(i)}. \nonumber
\end{eqnarray}

Thereafter, BS $i$ selects a set of $S$ users who feed back the
values up to the $S$-th smallest one in (\ref{EQ:novel_LIA}) among
all users in a cell, where $S\in\{1,\cdots,N_{\text{sub}}-1\}$. This
per-user optimization procedure may yield less amount of the LIF at
each BS than that of the conventional approach without beamforming.
In other words, by applying the beamforming design as well as the
user scheduling, the minimum required number $N$ of users per cell
such that a given LIF value is guaranteed may scale slower than
$\text{SNR}^{(K-1)S}$ shown in Theorem~\ref{THM:DOF_lower}, thus
leading to more feasible network realization.

\subsection{\label{subsect:discussions} Comparison with the Existing Methods}
In this subsection, the proposed scheme is compared with the two
existing strategies~\cite{Tse_IA, MotahariGharanMaddah-AliKhandani}
that also achieve the optimal DoFs in $K$-cell uplink networks. We
now focus on the case for $M=1$, i.e., $K$-cell IMAC model with a
single antenna at each BS, as in~\cite{Tse_IA,
MotahariGharanMaddah-AliKhandani}. Under the model, all of the OIN
and two existing IA methods achieve $K$ DoFs asymptotically as the
number $N$ of users in a cell tends to infinity, while their channel
models and (analytical) approaches are quite different from each
other.

Since the two schemes~\cite{Tse_IA,
MotahariGharanMaddah-AliKhandani} are analyzed in a {\em
deterministic} manner, it is possible to achieve a non-zero number
of DoFs, less than $K$, even for finite $N$ (independent of SNR). In
contrast, the achievability result of the OIN scheme is shown based
on a {\em probabilistic} approach, where infinitely many number of
users per cell, which scales faster than $\text{SNR}^{K-1}$, is
needed to guarantee full DoFs without any dimension expansion.

Now let us turn to discussing channel modelings. The subspace-based
IA scheme~\cite{Tse_IA} was introduced in $K$-cell uplink networks
allowing dimension expansion over the frequency domain, where it
requires $(K-1)$-level decomposability of channels at each link
since designing transmit vectors shown in~\cite{Tse_IA} takes
advantage of decomposed channel matrices. Accordingly, single-path
random delay channels are preferable due to the fact that they are
$(K-1)$-level decomposable and thus are convenient to align
interfering signals in practice. If we assume multipath frequency
selective channels, then the whole channel band should be splitted
into multiple sub-bands, each of which needs to be within coherence
bandwidth and to occupy many subcarriers for dimension expansion,
thereby yielding practical challenges. On the other hand, our scheme
works well with rich scattering environments, because it exploits
channel randomness for either nulling or aligning interfering
signals. However, a highly correlated channel among users (e.g.,
relatively poor scattering environment) may result in performance
degradation for the proposed scheme, since it is difficult to select
users such that the sum of LIF values is small enough.
In~\cite{MotahariGharanMaddah-AliKhandani}, another IA scheme, named
as real IA, has been introduced in cellular uplink networks with
time-invariant {\em real} channel coefficients---the IA operation is
conducted in signal scale but not in signal vector space.
Specifically, the strategy exploits the fact that a real line
consists of infinite rational dimensions. Instead, under the {\em
complex} channel environment, a multi-dimensional Euclidean space is
taken into account to align interference in signal vector space, as
shown in the conventional IA
methods~\cite{MaddahAliMotahariKhandani:08,Jafar_IA_original,Jafar_IA_MIMO,Jafar_IA_X_channel,Jafar_Shamai,Tse_IA}.

\section{\label{sect:Conclusion}Conclusion}
Two types of OIM protocols were proposed in wireless $K$-cell uplink
networks, where they do not require the global CSI, infinite
dimension extension, and parameter adjustment through iteration. The
achievable DoFs were then analyzed---the OIM protocol asymptotically
achieves $KS$ DoFs as long as $N$ scales faster than
$\text{SNR}^{(K-1)S}$, where $S\in\{1,\cdots,M\}$. It has been seen
that there exists a trade-off between the achievable DoFs and the
parameter $N$ based on the two OIM schemes. From the result of the
upper bound on the DoFs, it was shown that the OIM protocol with
$S=M$ achieves the optimal DoFs with the help of the MUD gain. In
addition, the two-step scheduling method that can further obtain a
power gain has been shown, and extension to the multi-carrier
systems has been discussed.

\appendix

\section{Appendix}

\subsection{Proof of Lemma~\ref{lem:gammafunc}} \label{PF:gammafunc}
The cdf $F_L(l)$ of the metric $L_j^i$ satisfies the inequality
$\gamma(z,x) \geq \frac{1}{z} x^{z} e^{-1}$ for $z>0$ and $0 \leq x
< 1$ since
\begin{align}
\gamma(z,x) &= \frac{1}{z} x^{z} e^{-x} + \frac{1}{z} \gamma(z+1,x) \nonumber \\
&= \frac{1}{z} x^{z} e^{-x} + \frac{1}{z(z+1)} x^{{z+1}} e^{-x} +
\cdots \nonumber \\
&\geq \frac{1}{z} x^{z} e^{-1}. \nonumber
\end{align}
Similarly, $\gamma(z,x)$ is upper-bounded by $2 z^{-1} x^{z}$ for
$z>0$ and $0 \leq x < 1$ from the fact that
\begin{align}
\gamma(z,x) &= \frac{1}{z} x^{z} e^{-x} + \frac{1}{z} \gamma(z+1,x) \nonumber \\
&\leq \frac{1}{z} x^{z} e^{-x} + \frac{1}{z} x^z e^{-x} \sum_{i=1}^{\infty} \left( \frac{x}{z+1} \right)^i \nonumber \\
&= \left(\frac{1}{z} + \frac{x}{z+1-x} \right) x^z e^{-x} \nonumber \\
&\leq \frac{2}{z} x^z. \nonumber
\end{align}
Applying the above bounds to (\ref{EQ:F_L}), we finally obtain
(\ref{eq:lemma1}), which completes the proof.


\subsection{Proof of Theorem~\ref{THM:upper}} \label{PF:upper}
Although the proof technique is essentially similar to that
of~\cite{Jafar_IA_original, JafarFakhereddin:07}, the whole steps
are shown here for completeness. Let $W_j^{(i)}$ and $R_j^{(i)}$
denote the message and its transmission rate of user $j$ in the
$i$-th cell, respectively. Consider a certain two-cell IMAC model
illustrated in Fig.~\ref{FIG:upper_fig}, where we eliminate messages
$W_j^{(3)} ,W_j^{(4)},\cdots, W_j^{(K)}$ for all
$j\in\{1,\cdots,\tilde{N}\}$ as well as $W_j^{(2)}$ for
$j\in\{2,\cdots,\tilde{N}\}$. We then obtain the following two
equations:
\begin{eqnarray}
\mathbf{y}_1 &=& \sum_{j=1}^{\tilde{N}}
\mathbf{h}_{1,j}^{(1)}x_{j}^{(1)} +
\mathbf{h}_{1,1}^{(2)}x_{1}^{(2)} + \mathbf{z}_1 \nonumber
\end{eqnarray}
and
\begin{eqnarray}
\mathbf{y}_2 &=& \sum_{j=1}^{\tilde{N}}
\mathbf{h}_{2,j}^{(1)}x_{j}^{(1)} +
\mathbf{h}_{2,1}^{(2)}x_{1}^{(2)} + \mathbf{z}_2, \label{EQ:y2_upp}
\end{eqnarray}
which yield
\begin{eqnarray}
\mathbf{y}_2' &=&
\mathbf{h}_{1,1}^{(2)}\left(\mathbf{h}_{2,1}^{(2)\dagger}\mathbf{h}_{2,1}^{(2)}\right)^{-1}\mathbf{h}_{2,1}^{(2)\dagger}\sum_{j=1}^{\tilde{N}}
\mathbf{h}_{2,j}^{(1)}x_{j}^{(1)} +
\mathbf{h}_{1,1}^{(2)}x_{1}^{(2)} + \mathbf{z}_2' \nonumber
\end{eqnarray}
after multiplying some channel matrices at both sides of
(\ref{EQ:y2_upp}), where
\begin{eqnarray} \nonumber
\mathbf{z}_2'\sim \mathcal{CN}\left(\mathbf{0},N_0
\left(\mathbf{h}_{1,1}^{(2)}\left(\mathbf{h}_{2,1}^{(2)\dagger}\mathbf{h}_{2,1}^{(2)}\right)^{-1}\mathbf{h}_{2,1}^{(2)\dagger}\right)\left(\mathbf{h}_{1,1}^{(2)}\left(\mathbf{h}_{2,1}^{(2)\dagger}\mathbf{h}_{2,1}^{(2)}\right)^{-1}\mathbf{h}_{2,1}^{(2)\dagger}\right)^{\dagger}\right).
\end{eqnarray}
Suppose that $\mathbf{z}_1=\bar{\mathbf{z}}+\bar{\mathbf{z}}_1$ and
$\mathbf{z}_2'=\bar{\mathbf{z}}+\bar{\mathbf{z}}_2$, where
\begin{eqnarray} \nonumber
\bar{\mathbf{z}}\sim \mathcal{CN}\left(\mathbf{0},\alpha
N_0\mathbf{I}_M\right),
\end{eqnarray}
\begin{eqnarray} \nonumber
\bar{\mathbf{z}}_1\sim
\mathcal{CN}\left(\mathbf{0},(1-\alpha)N_0\mathbf{I}_M\right),
\end{eqnarray}
and
\begin{eqnarray} \nonumber
\bar{\mathbf{z}}_2\sim \mathcal{CN}\left(\mathbf{0},N_0
\left(\mathbf{h}_{1,1}^{(2)}\left(\mathbf{h}_{2,1}^{(2)\dagger}\mathbf{h}_{2,1}^{(2)}\right)^{-1}\mathbf{h}_{2,1}^{(2)\dagger}\right)\left(\mathbf{h}_{1,1}^{(2)}\left(\mathbf{h}_{2,1}^{(2)\dagger}\mathbf{h}_{2,1}^{(2)}\right)^{-1}\mathbf{h}_{2,1}^{(2)\dagger}\right)^{\dagger}-\alpha
N_0\mathbf{I}_M\right).
\end{eqnarray}
Here, $\alpha$ is given by
\begin{eqnarray} \nonumber
\alpha=\min\left(1,\lambda_{\min}\left(\left(\mathbf{h}_{1,1}^{(2)}\left(\mathbf{h}_{2,1}^{(2)\dagger}\mathbf{h}_{2,1}^{(2)}\right)^{-1}\mathbf{h}_{2,1}^{(2)\dagger}\right)\left(\mathbf{h}_{1,1}^{(2)}\left(\mathbf{h}_{2,1}^{(2)\dagger}\mathbf{h}_{2,1}^{(2)}\right)^{-1}\mathbf{h}_{2,1}^{(2)\dagger}\right)^{\dagger}\right)\right).
\end{eqnarray}

Then by using Fano's inequality~\cite{CoverThomas:91}, we have
\begin{eqnarray}
\sum_{j=1}^{\tilde{N}}R_{j}^{(1)}+R_1^{(2)} \!\!\!\!\!\!\!&&\le
I\left(W_1^{(1)},\cdots,W_{\tilde{N}}^{(1)};
\mathbf{y}_1\right)+I\left(W_1^{(2)}; \mathbf{y}_2\right)+\epsilon_0
\nonumber\\ &&=I\left(W_1^{(1)},\cdots,W_{\tilde{N}}^{(1)};
\mathbf{y}_1\right)+I\left(W_1^{(2)};
\mathbf{y}_2'\right)+\epsilon_0 \nonumber\\ && \le
I\left(W_1^{(1)},\cdots,W_{\tilde{N}}^{(1)}; \mathbf{y}_1\right)
\nonumber\\ &&\quad+I\biggl(W_1^{(2)};
\mathbf{h}_{1,1}^{(2)}\left(\mathbf{h}_{2,1}^{(2)\dagger}\mathbf{h}_{2,1}^{(2)}\right)^{-1}\mathbf{h}_{2,1}^{(2)\dagger}\sum_{j=1}^{\tilde{N}}
\mathbf{h}_{2,j}^{(1)}x_{j}^{(1)} +
\mathbf{h}_{1,1}^{(2)}x_{1}^{(2)} + \bar{\mathbf{z}} \nonumber\\
&&\quad\bigg|W_1^{(1)},\cdots,W_{\tilde{N}}^{(1)},x_1^{(1)},\cdots,x_{\tilde{N}}^{(1)}\biggr)+\epsilon_0
\nonumber\\&&=I\left(W_1^{(1)},\cdots,W_{\tilde{N}}^{(1)};
\mathbf{y}_1\right) \nonumber\\&& \quad +I\biggl(W_1^{(2)};
\mathbf{h}_{1,1}^{(2)}x_{1}^{(2)} +
\bar{\mathbf{z}}\left|W_1^{(1)},\cdots,W_{\tilde{N}}^{(1)},x_1^{(1)},\cdots,x_{\tilde{N}}^{(1)}\right)+\epsilon_0
\nonumber\\ && \le I\left(W_1^{(1)},\cdots,W_{\tilde{N}}^{(1)};
\sum_{j=1}^{\tilde{N}} \mathbf{h}_{1,j}^{(1)}x_{j}^{(1)} +
\mathbf{h}_{1,1}^{(2)}x_{1}^{(2)} + \bar{\mathbf{z}}\right)
\nonumber\\&& \quad+ I\biggl(W_1^{(2)}; \sum_{j=1}^{\tilde{N}}
\mathbf{h}_{1,j}^{(1)}x_{j}^{(1)} +
\mathbf{h}_{1,1}^{(2)}x_{1}^{(2)} +\bar{\mathbf{z}} \nonumber\\&&
\quad
\bigg|W_1^{(1)},\cdots,W_{\tilde{N}}^{(1)},x_1^{(1)},\cdots,x_{\tilde{N}}^{(1)}\biggr)+\epsilon_0
\nonumber\\&&=
I\left(W_1^{(1)},\cdots,W_{\tilde{N}}^{(1)},W_1^{(2)};
\sum_{j=1}^{\tilde{N}} \mathbf{h}_{1,j}^{(1)}x_{j}^{(1)} +
\mathbf{h}_{1,1}^{(2)}x_{1}^{(2)} +
\bar{\mathbf{z}}\right)+\epsilon_0 \label{EQ:rate}
\end{eqnarray}
for an arbitrarily small $\epsilon_0>0$, where the second and third
inequalities come from reducing noise variance. The right-hand-side
of (\ref{EQ:rate}) represents the sum capacity of a MAC with an $M$
antenna receiver and $\tilde{N}$ single-antenna transmitters, and
thus if $\tilde{N}\ge M$, then the number of DoFs for the MAC is
given by $M$~\cite{Tse_book, Viswanath_Upper}. Hence, simply
assuming $\tilde{N}=N$, we obtain the following upper bounds:
\begin{eqnarray}
\sum_{j=1}^{N}R_{j}^{(1)}+R_1^{(2)}\le
M\log\text{SNR}+o\left(\log\text{SNR}\right) \nonumber
\end{eqnarray}
and
\begin{eqnarray}
\sum_{j=1}^{N}d_{j}^{(1)}+d_1^{(2)}\le M. \nonumber
\end{eqnarray}
Similarly, for any $k\in\{1,2,\cdots,N\}$, we obtain
\begin{eqnarray}
\sum_{j=1}^{N}d_{j}^{(1)}+d_k^{(2)}\le M   \label{EQ:case1_upp}
\end{eqnarray}
and
\begin{eqnarray}
d_{k}^{(1)}+\sum_{j=1}^{N}d_j^{(2)}\le M.   \label{EQ:case2_upp}
\end{eqnarray}
Adding up all the possible combinations over $k$ shown in
(\ref{EQ:case1_upp}) and (\ref{EQ:case2_upp}), we finally have
\begin{eqnarray}
\sum_{j=1}^{N}d_{j}^{(i)}\le \frac{NM}{N+1} \nonumber
\end{eqnarray}
at a given cell $i$. Since there are $K$ cells in the IMAC model,
the total number of DoFs is upper-bounded by (\ref{EQ:upper}), which
completes the proof.

%

\begin{center}
Acknowledgement
\end{center}
The authors would like to thank Sae-Young Chung for his helpful
discussions.

\newpage


\begin{figure}[t!]
  \begin{center}
  \leavevmode \epsfxsize=0.63\textwidth   
  \leavevmode 
  \epsffile{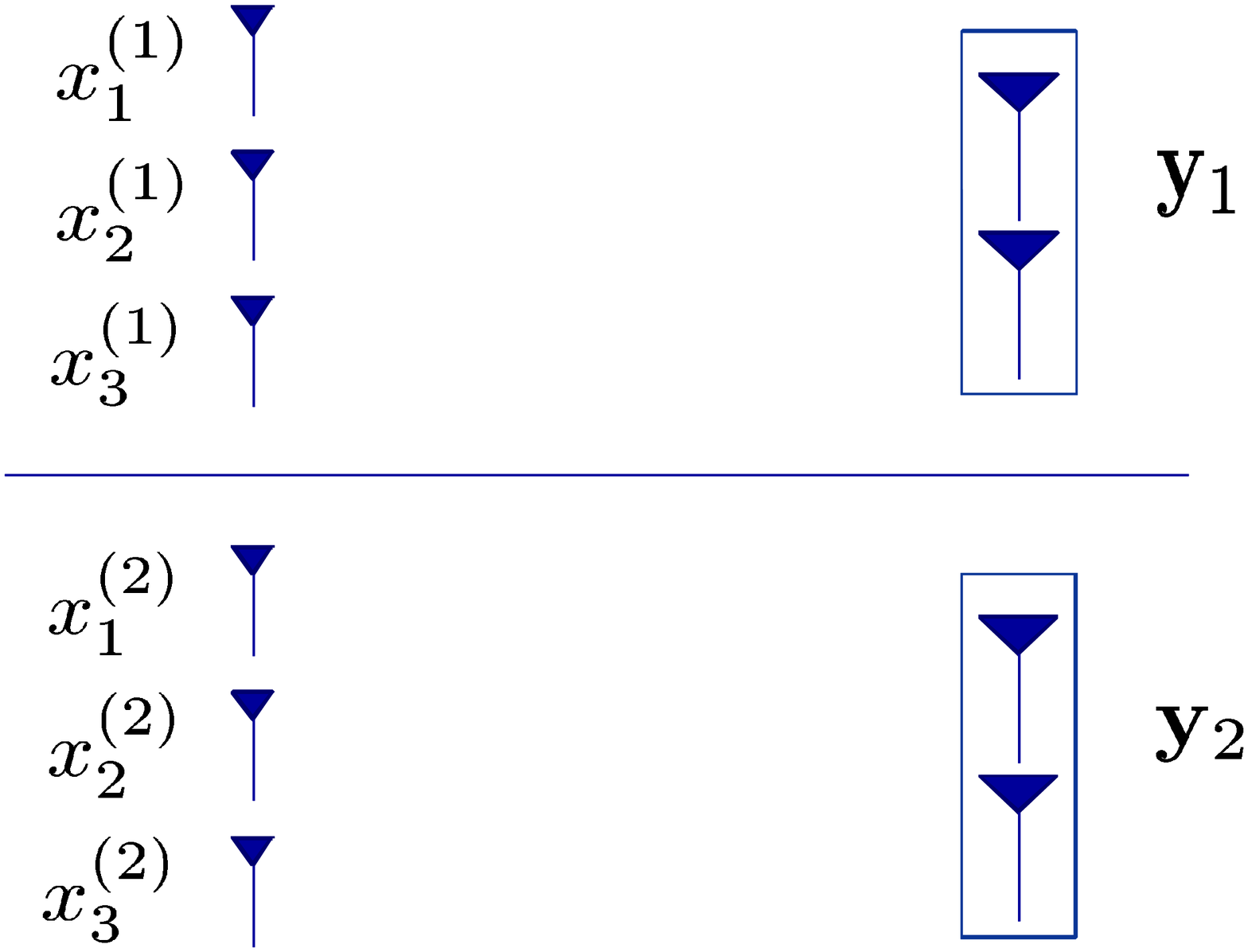}
  \caption{The IMAC model with $K$=2, $N=3$, and $M=2$.}
  \label{FIG:system}
  \end{center}
\end{figure}

\begin{figure}[t!]
  \begin{center}
  \leavevmode \epsfxsize=0.85\textwidth   
  \leavevmode 
  \epsffile{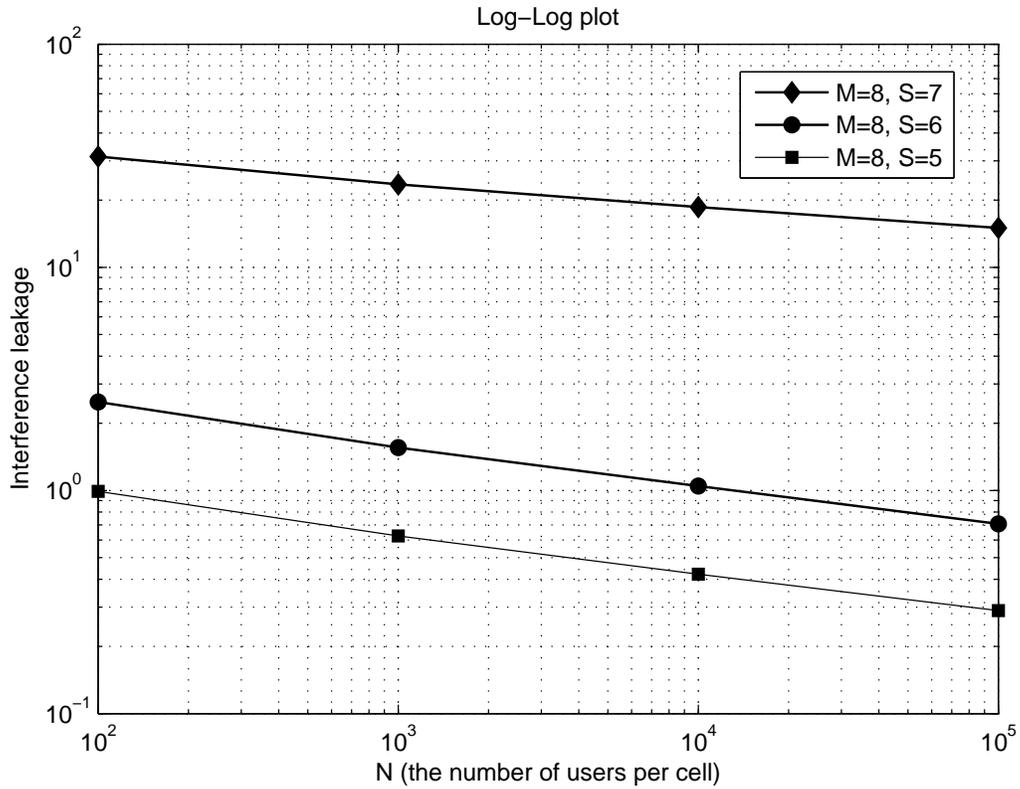}
  \caption{The leakage interference with respect to $N$ for some $S$. The system with $M=8$, $K=2$, and $SK>M$ is considered.}
  \label{FIG:leakage_K_2}
  \end{center}
\end{figure}

\begin{figure}[t!]
\begin{center}
\leavevmode \epsfxsize=0.65 \textwidth   
\leavevmode 
\epsffile{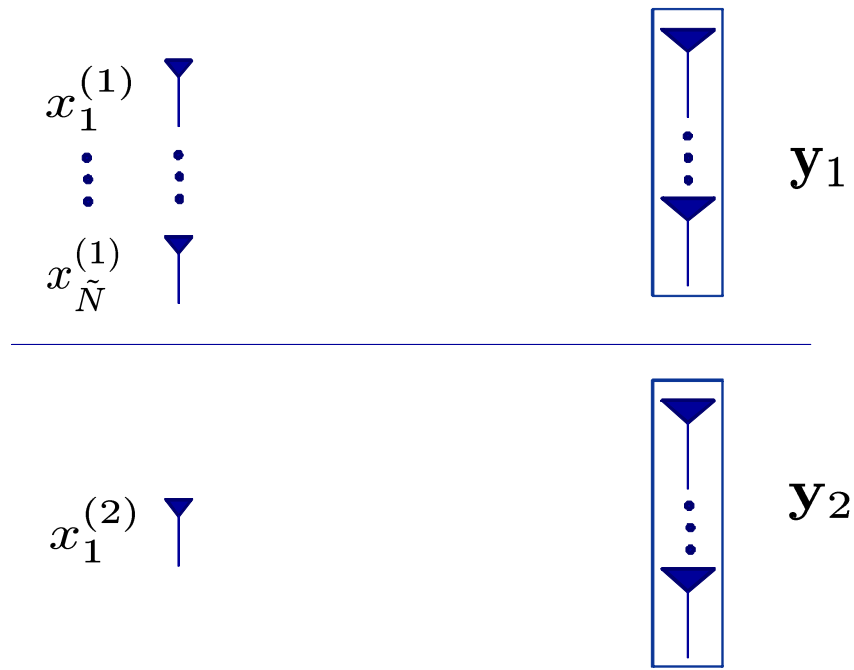} \caption{The two-cell IMAC model defined in
Section~\ref{sect:Upperbound_DoF}.} \label{FIG:upper_fig}
\end{center}
\end{figure}

\end{document}